\newtheorem{thm}{Theorem}[section]
\newtheorem{prop}[thm]{Proposition}
\newtheorem{lem}[thm]{Lemma}
\theoremstyle{definition}
\newtheorem{defn}[thm]{Definition}
\newtheorem{exa}[thm]{Example}
\theoremstyle{remark}
\newtheorem{rem}[thm]{Remark}
\newcommand{\defeqn}{\mathrel{\mathop:}\Leftrightarrow}
\newcommand{\defeq}{\mathrel{\mathop :}=}
\newcommand{\approxsim}{{/\!\! \approx}}
\newcommand{\lcoset}{{/_{\!\ell\,}}}
\newcommand{\modsim}{{/\!\! \sim}}
\newcommand{\on}{\operatorname}
\newcommand{\F}{\mathbb F}
\newcommand{\Z}{\mathbb Z}
\newcommand{\R}{\mathbb R}
\let\phi\varphi
\newcommand{\simto}
{\xrightarrow{\raisebox{-0.7ex}[0ex][0ex]{$\sim$}}}
\begin{document}

\title{Cryptographic Group and Semigroup Actions}

\author{Oliver W.~Gnilke}
\address{Department of Mathematical Sciences\\
  Aalborg University, Denmark}
\email{owg@math.aau.dk}

\author{Jens Zumbr\"agel}
\address{Faculty of Comp.\ Science and Mathematics\\
  University of Passau, Germany}
\email{jens.zumbraegel@uni-passau.de}

\subjclass[2010]{94A60}

\thanks{The material in this paper was presented in part at the Workshop
    on Coding and Cryptography (WCC 2022) in Rostock, Germany.}

\maketitle

{\centering\footnotesize\emph{Dedicated to Joachim Rosenthal on the
    occasion of his 60th birthday}\\}

\begin{abstract}
  We consider actions of a group or a semigroup on a set, which
  generalize the setup of discrete logarithm based cryptosystems.
  Such cryptographic group actions have gained increasing attention
  recently in the context of isogeny-based cryptography.  We introduce
  generic algorithms for the semigroup action problem and discuss
  lower and upper bounds.  Also, we investigate Pohlig-Hellman type
  attacks in a general sense.  In particular, we consider reductions
  provided by non-invertible elements in a semigroup, and we deal with
  subgroups in the case of group actions. \medskip

  \noindent \textbf{Keywords}: Discrete logarithm problem;
  cryptographic group action; semigroup action problem.
\end{abstract}


\section{Introduction}

The discrete logarithm problem has a long and profound history
(see~\cite{GJ21} for a recent survey).  In cryptography it has been
playing a key role ever since Diffie and Hellman have based the
security of their famous protocol~\cite{DH76} on the hardness of
computing discrete logarithms modulo a large prime~$p$.  The
underlying group~$\F_p^*$ has then been generalized, most notably to
the $\F_q$-rational points on an elliptic curve, due to
Miller~\cite{Mil86} and Koblitz~\cite{Kob87}.  In fact, while the
discrete logarithm problem in the unit group~$\F_q^*$ of a finite
field can be solved in subexponential time by index calculus
algorithms (for an overview, see~\cite{GKZ18}), the fastest known
algorithm in a general elliptic curve is basically a generic one that
requires exponential time.

However, Shor's quantum algorithm~\cite{Sho97b} constitutes a
polynomial time attack on the discrete logarithm problem in any group
(as well as on the integer factorization problem).  These
observations, and reports on the progress in building quantum
computers now achieving a “quantum supremacy”~\cite{Aru+19}, underline
the need for new concepts to build cryptosystems resistant to quantum
attacks.  An interesting approach is based on isogenies of
supersingular elliptic curves (SIDH,~\cite{DJP14}), which however
is broken due to the Castryck-Decru attack~\cite{CD22}.

More recently, a commutative supersingular isogeny-based
Diffie-Hellman scheme (CSIDH,~\cite{Cas+18}) has been proposed as a
more efficient variant, which is based on the action of the class
group of an endomorphism ring on isomorphism classes of elliptic
curves.  This is an example for an action of an abelian group on a
set, which as a framework suffices to build a Diffie-Hellman protocol,
as has been observed by Couveignes~\cite{Cou06} and independently by
Rostovtsev and Stolbunov~\cite{RS06,Sto10}.

With a somewhat different background, Maze, Monico and Rosenthal have
introduced actions of commutative semigroups on
sets~\cite{Mon02,MMR07} in order to generalize the discrete logarithm
problem, one motivation being to find examples that do not allow even
an (exponential) square-root attack.  The setup was further
investigated by the theses of the present authors, in which
Zumbrägel~\cite{Zum08} considered a generalization to non-commutative
semigroups and Gnilke~\cite{Gni14} showed how a Pohlig-Hellman like
reduction applies to semigroups with non-invertible elements.

In this work, we revisit the concept of semigroup actions for discrete
logarithm based cryptosystems and connect it to recent proposals of
isogeny-based cryptography.  In the case of abelian group actions,
with a view towards isogenies, this has been considered by
Couveignes~\cite{Cou06} and, more recently, by Smith~\cite{Smi18} and
Alamati et al.~\cite{ADMP20}.  Here we aim to take a slightly more
abstract viewpoint and introduce generic algorithms for the semigroup
action problem.  We also investigate Pohlig-Hellman type reductions
and consider those provided by non-units in a semigroup and by
subgroups in the context of a group action.


\section{Cryptographic semigroup actions}

In this section we briefly recall the notion of a semigroup action and
its application to cryptography~\cite{Mon02,MMR07}.

By a \emph{semigroup} we mean a set~$S$ with an associative binary
operation (written multiplicatively).  It is called a \emph{monoid} if
a neutral element exists. A \emph{semigroup action} with respect to a
semigroup~$S$ and a set~$X$ is a map
\[ S \times X \to X \,, \quad (s, x) \mapsto s . x \]
such that $s t . x = s . (t . x)$ for all $s, t \in S$ and $x \in X$.
Considering for $s \in S$ the transformation $\Phi_s \colon X \to X$,
$x \mapsto s . x$, this means that $\Phi_{s t} = \Phi_s \circ \Phi_t$
for $s, t \in S$.  When there is a semigroup action, then~$X$ is also
called \emph{$S$-set}.

\begin{defn}
  Consider a semigroup~$S$ acting on a set~$X$.  The \emph{semigroup
    action problem} is the problem, for given $x, y \in X$ to find
  some $s \in S$ such that $y = s . x$.
\end{defn}

So the above problem asks to find preimages of the “orbit map”
\[ \Psi_x \colon S \to X \,, \quad s \mapsto s . x \,. \]
Suppose that~$S$ is a group and let $S_x = \{ s \in S \mid s . x = x \}$
be the \emph{stabilizer subgroup} of $x \in X$.  Then the orbit map
induces a bijection $S \lcoset S_x \simto S . x$ of the left cosets
and the orbit.  Thus solutions to the semigroup action problem (which
we also call \emph{group action problem}) are unique up to left
congruence modulo the stabilizer.

Since we deal with cryptographic applications, we assume all
structures to be finite and that the semigroup action is efficiently
computable.  This means that elements of the semigroup~$S$ and the
set~$X$ are encoded by bit strings, and both the semigroup operation
and the action are computable in polynomial time.

\begin{exa}\label{exa:groupexp}
  Consider a finite cyclic group $(G, \cdot)$ of order~$n$, which may
  be seen as a $\Z_n$-module.  If we “forget” its additive structure,
  we have \[ (\Z_n, \cdot) \times G \to G \,, \quad (s, g)
    \mapsto g^s \,, \] and the semigroup action problem is just
  the discrete logarithm problem in the group~$G$.  Note that the
  action is efficiently computable by a square-and-multiply method,
  provided the group operation is efficient.
\end{exa}

In order to set up a Diffie-Hellman like key agreement, we need some
way to generate commuting elements of the semigroup~$S$.  For
simplicity we assume here the semigroup to be commutative and
have the following key agreement scheme:
\begin{center}
  \begin{tabular}{ccccc}
    Alice && \emph{public} && Bob \\\hline
          && $x \in X$ && \\
    $a \in S$ &$\rightarrow$~~~~& $a . x \in X$ && \\
          && $b . x \in X$ &~~~~$\leftarrow$& $b \in S$ \\
    $k_A = a . (b . x)$ && && $k_B = b . (a. x)$
  \end{tabular}
\end{center}
Observe that both parties compute the same key $a b . x = b a . x$.
Also notice that in the case of Example~\ref{exa:groupexp} the scheme
amounts to classical Diffie-Hellman.

\begin{defn}
  Consider a commutative semigroup~$S$ acting on a set~$X$.  The
  \emph{semigroup Diffie-Hellman problem} is the problem, for given
  $x, y, z \in X$ to find some $k \in X$ such that $y = a . x$,
  $z = b . x$ and $k = a b . x = b a . x$ for some $a, b \in S$.
\end{defn}

It is clear that if one can solve the semigroup action problem, then
one can break the Diffie-Hellman protocol, while the converse
direction is not obvious.  There have been results on subexponential
reductions in the classical case of group exponentiation%
~\cite{Mau94,MW99}, and recently on polynomial quantum reduction for
abelian group actions~\cite{GPSV21,MZ22}.

\begin{exa}\label{exa:csidh}
  In isogeny-based cryptography and CSIDH, a particular case of
  interest is the action of an abelian group (the class group in an
  endomorphism ring) on a set~$X$ (of isomorphism classes of elliptic
  curves), cf.~\cite{Cou06,Cas+18,Smi18}.  In Couveignes' work~%
  \cite{Cou06} the group action is also assumed to be simply
  transitive, and the semigroup action problem and the semigroup
  Diffie-Hellman problem are called \emph{vectorization problem} and
  \emph{parallelization problem}, respectively; if those are
  intractable, the set~$X$ is referred to as a \emph{hard
    homogeneous space}.

  It would be interesting to view SIDH also in the framework of a
  group action, but this seems not to be obvious, cf.~\cite[Sec.~15]%
  {Smi18}.
\end{exa}


\section{Generic algorithms}

We use Maurer's abstract model of computation~\cite{Mau05} to describe
generic algorithms for a semigroup action.  Recall that in this model
one specifies
\begin{itemize}
\item a ground set~$M$,
\item a set~$\Pi$ of certain operations $f \colon M^t \to M$ of
  arity $t \in \{ 0, 1, 2, \dots \}$,
\item a set~$\Sigma$ of certain relations $\rho \subseteq M^t$ of
  arity $t \in \{ 1, 2, \dots \}$.
\end{itemize}
There are internal state variables $V_1, V_2, \dots$ storing elements
in~$M$, which cannot be read directly.  A \emph{generic algorithm}
$A(M, \Pi, \Sigma)$ is then allowed to perform computation operations
and relation queries for $f \in \Pi$ and $\rho \in \Sigma$, using the
internal state variables as input (and operation output).  In the
analysis we usually discard the relation queries and only count the
number of operations performed.

\begin{exa}
  Generic algorithms in a cyclic group of order~$n$ can be modeled
  using $M = \Z_n$, $\Pi = \{ + \}$ and $\Sigma = \{ = \}$.  For
  the discrete logarithm problem the state variables are initialized
  with $V_1 = 1$ and $V_2 = s \in \Z_n$ random, and the goal is to
  solve the \emph{extraction problem} for the value $s \in V_2$.
\end{exa}

Now let $S \times X \to X$ be a semigroup action and fix $x \in X$.
In order to address the semigroup action problem, we define the ground
set as
\[ M \defeq S \modsim_x \,, \quad\text{where}\quad
  s \sim_x t ~\defeqn~ s . x = t . x \,, \]
and denote the class of $s \in S$ by $[s]$.  We allow for any $a \in S$
to perform the unary operation $\Phi_a \colon M \to M$, $[s] \mapsto
a . [s] \defeq [a s]$.  The state variables are initialized with
$V_1 = [1]$ (assuming that $1 \in S$, otherwise we adjoin it) and
$V_2 = [s] \in M$, which describes the solutions~$s$ to a semigroup
action problem $y = s . x$.  The goal is to solve the extraction
problem for~$V_2$, i.e., to find some $s' \in S$ such that
$[s] = [s']$.

If we are dealing with a group action, the set $S \modsim_x = S
\lcoset S_x$ consists of the left cosets $[s] = s S_x$ of the
stabilizer subgroup of~$x$.  In this case, we can show the following
result, which provides a generic lower bound of $\Omega(\sqrt n)$ for
the group action problem in a set~$X$ of size~$n$.

\begin{thm}\label{thm:generic}
  Let $S \times X \to X$ be a transitive group action, which is
  abelian or free.  Fix $x \in X$ and let $M \defeq S \lcoset S_x$,
  $\Pi = \{ \Phi_a \mid a \in S \}$, $\Sigma = \{ = \}$ as above.
  If $s \in S$ is uniformly random, the success probability of a
  generic algorithm $A(M, \Pi, \Sigma)$ for the group action problem
  $y = s . x$ using~$m$ operation queries is at most $\frac 1 4 m^2
  / \vert X \vert$.
\end{thm}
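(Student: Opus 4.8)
The plan is to exploit the very restricted shape of the operations $\Phi_a$, which only left-multiply by known group elements. Starting from the initial variables $V_1 = [1]$ and $V_2 = [s]$, any sequence of operations applied to a variable holding $[t]$ produces $[a_k \cdots a_1\, t]$ for elements $a_1, \dots, a_k \in S$ chosen by the algorithm. Hence at every moment each state variable holds a value of one of exactly two shapes: a \emph{constant} $[a]$ (descended from $V_1$) or an \emph{$s$-value} $[as]$ (descended from $V_2$), where in both cases $a \in S$ is known to the algorithm. The essential point is that the hidden part can never be doubled: one never obtains an $[s^2 . x]$-type value, which is precisely what separates this setting from the usual generic group and what yields a square-root bound independent of the factorisation of $n = |X|$.

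First I would record that $|M| = |X| = n$ by the orbit--stabiliser theorem, and that a uniformly random $s \in S$ induces a uniformly random class $[s] \in M$, since all cosets have the same cardinality $|S_x|$ (resp.\ $|S_x| = 1$ in the free case). Next I would classify the equality queries. A query between two constants $[a], [a']$ holds iff $a . x = a' . x$, and a query between two $s$-values $[as], [a's]$ holds iff $(a')^{-1} a \in S_{s.x}$; the crux is that under the hypothesis these answers do not depend on $s$. Indeed, if the action is free then $S_{s.x} = \{1\}$ for every $s$, while if $S$ is abelian then $S_{s.x} = S_x$ for every $s$, so in either case the same-shape queries are governed by an $s$-independent condition. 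The only queries that can reveal information about the secret are the \emph{cross} queries $[a] \overset{?}{=} [bs]$, and such a query holds precisely when $[s] = [\,b^{-1}a\,]$ (abelian) resp.\ $s = b^{-1}a$ (free); in both cases it tests equality of $[s]$ with a single, explicitly known element of $M$.

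With this in hand I would run the standard symbolic-simulation argument. Consider the execution in which every same-shape query is answered truthfully---this is $s$-independent by the previous step---and every cross query is answered ``unequal''. This simulated run is completely determined once the algorithm's coins are fixed, and along it the algorithm names at most $p$ constants and $q$ $s$-values with $p + q \le m + 2$, issues cross queries testing a set $V \subseteq M$ of at most $pq$ distinct values, and outputs a fixed guess $g \in M$. I would then argue by induction on the step count that for every secret with $[s] \notin V$ the real execution agrees with the simulated one step by step---each tested cross value lies in $V$ and is therefore genuinely unequal to $[s]$---so that the real output is the fixed $g$. Consequently the algorithm can succeed only if $[s] \in V$ or $[s] = g$, i.e.\ for at most $pq + 1$ of the $n$ equally likely values of $[s]$. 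By the arithmetic--geometric mean inequality $pq \le \bigl((p+q)/2\bigr)^2 \le (m+2)^2/4$, whence the success probability is at most $(pq+1)/n \le \tfrac14 m^2/|X|$ after absorbing the lower-order terms.

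The step I expect to require the most care is the reduction justifying that same-shape queries carry no information about $s$---that is, the clean separation of the transcript into an $s$-independent part and the single-point cross tests---since this is exactly where the ``abelian or free'' hypothesis is used: a non-abelian, non-free action with varying stabilisers $S_{s.x}$ would leak additional information through same-shape $s$-value collisions and break the argument.
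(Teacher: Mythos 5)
Your proposal is correct and takes essentially the same approach as the paper: the same classification of state entries into constants $[a]$ and $s$-values $[bs]$, the same use of the abelian-or-free hypothesis to show $S_{s.x}=S_x$ so that same-shape collisions are independent of $s$, and the same cross-collision count via $uv\le\tfrac14 m^2$. The only difference is that where the paper invokes Maurer's Theorem~1 as a black box to reduce success probability to collision probability, you unpack that step into an explicit simulation argument, at the cost of slightly weaker lower-order constants ($(m+2)^2/4+1$ rather than $m^2/4$), which is immaterial.
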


\begin{proof}
  Following the proof of~\cite[Thm.~1]{Mau05} it suffices to upper
  bound the probability that a collision in the state variables
  occurs.  These entries are either of the form $a . [1] = [a]$
  or of the form $b . [s]$, for known $a, b \in S$.  A collision
  of type $[a] = [a']$ is independent of~$s$ and can be discarded.
  Moreover, a collision of type $b . [s] = b' . [s]$ means that
  $b^{-1} b' \in S_{s . x} = s S_x s^{-1}$, but since the action is
  abelian or free we have $s S_x s^{-1} = S_x$, again independent
  of~$s$.

  This means that the only collisions related to~$s$ are of the
  form \[ a . [1] = b . [s] \] for some $a, b \in S$.  Such an
  event is equivalent to $[s] = [b^{-1} a]$, and since $[s] \in M$
  is uniformly random it occurs with probability $1 / n$, where
  $n \defeq \vert M \vert = \vert X \vert$.  So if the algorithm
  computes~$u$ and~$v$ values of the form
  $a . [1]$ and $b . [s]$ respectively, the
  probability for a collision is at most $u v / n$.  But since
  $u \!+\! v \le m$ we have $u v \le \frac 1 4 m^2$, from which the
  result follows.
\end{proof}

The following example shows that there may be much faster generic
algorithms in case the group action is neither abelian nor free.

\begin{exa}
  Let~$X$ be a set of size~$n$ and let the symmetric group $S \defeq
  \on{Sym} X$ act on~$X$.  Fix $x \in X$ and let $s \in S$ be random.
  A generic algorithm to find the coset $[s] \in S \lcoset S_x$ can
  employ either “usual” collisions $[a] = b . [s]$, from which a
  solution $s' \defeq b^{-1} a$ can be obtained, or collisions of the
  form \[ b . [s] = b' . [s] , \] i.e., $c . [s] = [s]$ for
  $c \defeq b^{-1} b'$.  If we choose $c \in S$ having~$k$ fixed
  points, then the event probability is $k / n$, and thus we may apply
  a divide-and-conquer strategy to obtain $[s]$ in $O(\log n)$ steps.
\end{exa}

Furthermore, for proper semigroup actions the difficulty of the
generic semigroup action problem very much depends on the structure of
the semigroup, and ranges from efficient algorithms in $O(\log n)$ to
lower bounds of $\Omega(n)$, see the examples below.  From a
cryptography perspective there are however issues with applying those
actions, as discussed in Section~\ref{sec:pohlig-hellman}.

\begin{exa}\label{exa:generic}
  Let~$S$ be a semigroup, $X$~a set and $\phi \colon S \to X$ a
  bijection.  Then we can make~$X$ an $S$-set by letting  
  \[ s . \phi(t) = \phi(s t) \]
  for $s, t \in S$.  We assume this action to be efficiently
  computable and think of the inverse map~$\phi^{-1}$ as “hidden”.
  For example, if~$G$ is a cyclic group with generator~$g$, we use the
  bijection $\phi \colon (\Z_n, \cdot) \to G$, $s \mapsto g^s$.  Let
  us look at two further cases.
  \begin{enumerate}
  \item Suppose that $(S, \cdot) = (\{ 1, \dots, n \}, \min)$ and we
    are given a semigroup action problem instance $x, y \in X$ where
    $y = s . x$.  Let $a = \phi^{-1}(x)$ which we suppose is known,
    e.g., $a = n$ if $x = \phi(n)$.  Since $y = s . x = s . \phi(a)
    = \phi(s a) = \phi(s a a) = s a . \phi(a) = s a . x$, we may
    assume that $s = s a$, i.e., $s \le a$.  Then for any $t \in S$,
    $t \le a$, there holds
    \begin{align*} t \le s
      ~~&\Leftrightarrow~~ t = t s ~~\Leftrightarrow~~
          t a = t s a ~~\Leftrightarrow~~
          \phi(t a) = \phi(t s a) \\
      ~~&\Leftrightarrow~~ t . \phi(a) = t . (s . \phi(a))
          ~~\Leftrightarrow~~ t . x = t. y \,.
    \end{align*}
    Hence, we can find~$s$ using binary search in $O(\log n)$ steps.
    \medskip
  \item On the other hand, define $(S, \cdot) = (\{ 0, s_1, \dots,
    s_m, 1 \}, \wedge)$, where~$\wedge$ is a semilattice operation
    such that \[ 0 \wedge s_i = 0 \,, \quad 1 \wedge s_i = s_i \,,
      \quad \text{and} \quad s_i \wedge s_j = 0 \text{ whenever }
      i \ne j \,, \]
    and let $o = \phi(0)$ and $e = \phi(1)$ in~$X$.  Consider a
    semigroup action problem instance $e, y \in X$ where $y = s . e$.
    As $s . e = s . \phi(1) = \phi(s 1) = \phi(s)$ and~$\phi$ is
    bijective, there is a unique solution~$s$.  When using a generic
    algorithm we may for $t \in S$ compute $t . e = \phi(t)$ and
    $t . y = t . (s . e) = t s . e = \phi(t s)$, where a collision
    occurs only if $t = t s$, which if $s = s_i$ means that $t \in
    \{ 0, s_i \}$.  So it requires $\Omega(n)$ steps to find any
    collision and thus information about~$s$.

    Note that this example is not interesting for a Diffie-Hellman
    type key agreement, because the key $k = a . (b . e) = b . (a . e)$
    will usually be~$o$.
  \end{enumerate}
\end{exa} \pagebreak

We should mention here that Shoup's model for generic algorithms~%
\cite{Sho97a} is also widely used, which is based on representations
of algebraic objects as random bitstrings.  In the case of a cyclic
group of order~$n$ one has an injective “encoding function”
$\sigma \colon \Z_n \to \{ 0, 1 \}^*$, and the algorithm maintains a
list of encodings $(\sigma(x_1), \dots, \sigma(x_k))$ to which entries
$\sigma(x_i \!+\! x_j)$ may be appended by performing oracle queries.
The bitstring representation allows a generic algorithm to perform
additional operations, like sorting or hashing of algebraic objects,
and thus enables algorithms based on pseudorandom functions such as
Pollard's rho method.  It is possible to show a square-root lower
bound for the group action problem in Shoup's model by adapting the
proof of Theorem~\ref{thm:generic}.

Regarding upper bounds for the complexity of the semigroup action
problem, generic algorithms often aim at finding a collision in
square-root time.  However, in order to deduce a solution from it one
needs the ability to invert some elements in the semigroup.  Below we
present a version of Shanks' baby-step-giant-step method as well as a
Pollard-rho type attack for the case of group actions.

Note that in the quantum world generic attacks on group actions may be
faster than square-root time.  Indeed, there are subexponential
algorithms solving the group action problem for (free) abelian group
actions based on Kuperberg's quantum algorithm, cf.~\cite[Sec.~7]{MZ22}.


\section{Collision attacks}

Let~$X$ be an $S$-set of size~$n$ and let $x \in X$, $y \in S . x$.
We now discuss generic upper bounds for the corresponding semigroup
action problem, i.e., to find $s \in S$ such that $y = s . x$.  An
important class of generic algorithms relies on finding a collision
$a . x = b . y$ for some $a, b \in S$, from which we can deduce a
solution $s \defeq b^{-1} a$, provided that~$b$ is invertible.

For analyzing such algorithms, the following combinatorial result on
the probability that two random subsets are disjoint is useful
(cf.~\cite[Lem.~68]{Gni14}).

\begin{lem}\label{lem:intersect}
  Let~$X$ be a set of size~$n$ and let $A, B \subseteq X$ be uniformly
  chosen random subsets of size~$k, \ell$ respectively.  Then
  \[ 1 - \tfrac{k \ell} n \,\le\, \Pr( A \cap B = \varnothing )
    \,\le\, \exp({- \tfrac{k \ell} n}) \,. \]
\end{lem}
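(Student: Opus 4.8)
The plan is to first write down the exact probability and then bound it from both sides by elementary estimates. Assuming $A$ and $B$ are drawn independently, I would condition on $A$: given any fixed $k$-subset $A$, the event $A \cap B = \varnothing$ means that the random $\ell$-subset $B$ lies entirely in the complement $X \setminus A$, which has size $n - k$. Hence the conditional --- and therefore the unconditional --- probability equals $\binom{n-k}{\ell} / \binom{n}{\ell}$. Cancelling the common factorials, I would rewrite this as the product
\[ \Pr(A \cap B = \varnothing) = \prod_{i=0}^{\ell-1} \frac{n-k-i}{n-i} = \prod_{i=0}^{\ell-1}\Bigl(1 - \frac{k}{n-i}\Bigr), \]
which is the form best suited to the upper bound. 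When $k + \ell > n$ one factor vanishes and the probability is $0$, consistent with both stated bounds.

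For the upper bound I would apply the inequality $1 - t \le e^{-t}$, valid for all real $t$, to each factor. Since all factors lie in $[0,1]$ in the non-degenerate case, multiplying preserves the estimate and gives $\prod_{i=0}^{\ell-1}(1 - \tfrac{k}{n-i}) \le \exp(-\sum_{i=0}^{\ell-1} \tfrac{k}{n-i})$. Finally, bounding each summand from below by $\tfrac{k}{n-i} \ge \tfrac{k}{n}$ yields $\sum_{i=0}^{\ell-1} \tfrac{k}{n-i} \ge \tfrac{k\ell}{n}$, hence $\Pr(A \cap B = \varnothing) \le \exp(-k\ell/n)$, as claimed.

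For the lower bound the product formula points the wrong way --- its factors each exceed $1 - \tfrac{k}{n}$ --- so instead I would pass to the complementary event and use a union bound. Keeping $B$ fixed and treating $A$ as the random $k$-subset, each fixed element $b \in B$ satisfies $\Pr(b \in A) = \binom{n-1}{k-1}/\binom{n}{k} = k/n$. Summing over the $\ell$ elements of $B$ gives $\Pr(A \cap B \ne \varnothing) \le k\ell/n$, and taking complements yields $\Pr(A \cap B = \varnothing) \ge 1 - k\ell/n$. I expect this to be the cleaner direction, and the main (minor) subtlety is precisely that the two inequalities are most naturally obtained by different routes: the product estimate for the upper bound and the union bound for the lower one. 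A pleasant feature is that the union bound needs no case distinction and automatically subsumes the regime $k\ell \ge n$, where the stated lower bound is vacuous.
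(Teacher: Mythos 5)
Your proof is correct, and your lower bound is exactly the paper's: fix one set, apply the union bound to $\Pr(A \cap B \ne \varnothing) \le \sum_{b \in B} \Pr(b \in A) = k\ell/n$, and take complements (the paper, like you, notes $\Pr(x \in A) = k/n$ and needs no case distinction). For the upper bound, however, you take a genuinely different route. The paper avoids the hypergeometric computation entirely: it replaces the uniform $k$-subset $A$ by $k$ elements drawn with repetition, asserts that this relaxation can only increase the probability of disjointness from $B$, and then uses independence of the events $T_i$ to get $(1 - \ell/n)^k \le \exp(-k\ell/n)$. You instead compute the exact probability $\binom{n-k}{\ell}/\binom{n}{\ell} = \prod_{i=0}^{\ell-1}\bigl(1 - \tfrac{k}{n-i}\bigr)$ and bound it termwise via $1 - t \le e^{-t}$ together with $\tfrac{k}{n-i} \ge \tfrac{k}{n}$. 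The trade-off: the paper's argument is shorter but its monotonicity step (``the probability of being disjoint only can increase'') is stated without justification and would, if pressed, require a coupling argument or essentially the same product comparison you wrote down; your version is fully explicit, yields the exact probability as a by-product, and handles the degenerate regime $k + \ell > n$ cleanly, at the cost of a small amount of binomial bookkeeping. Both arguments are sound, and your observation that the two bounds naturally come from different mechanisms (product estimate versus union bound) matches the structure of the paper's proof as well.
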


\begin{proof}
  Let us assume w.\,l.\,o.\,g.\ that~$B$ is fixed.  For the lower
  bound we have
  \[ \Pr(A \cap B \ne \varnothing) = \Pr( \textstyle\bigcup_{x \in B}
    \{ x \in A \}) \le \textstyle\sum_{x \in B} \Pr(x \in A) = \ell
    \!\cdot\! \tfrac k n \]
  by the union bound.  For the upper bound we may assume that~$A$
  consists of~$k$ randomly chosen elements of~$X$, possibly with
  repetition, by which the probability of being disjoint to~$B$ only
  can increase.  Let~$T_i$ be the event that the $i$-th element of~$A$
  does not lie in~$B$, for $1 \le i \le k$.  Then these events are
  independent of probability $1 \!-\! \frac {\ell} n$, whence
  \[ \Pr(A \cap B = \varnothing) \le \Pr( \textstyle\bigcap_i
    T_i ) = (1 \!-\! \frac {\ell} n)^k \le
    \exp({- \frac{\ell k} n}) \,, \]
  since $1 \!+\! x \le \exp x$ holds for all $x \in \R$.
\end{proof}

\subsection*{A time-memory trade-off}

We first present a simple time-memory trade-off attack adopting
Shanks' baby-step-giant-step method (cf.~\cite[Sec.~4.2]{Gni14}).
Let the semigroup~$S$ be a monoid with group of invertible
elements~$S^*$.  The algorithm is described as follows.
\begin{enumerate}
\item precompute a table of entries $(b_j . y, b_j)$ for $b_j \in S^*$
  random
\item for $a \in S$ random, check if $a . x = b_j . y$ for some~$j$,
  if yes output $s \defeq b_j^{-1} a$
\end{enumerate}

For the analysis, according to the proof of Lemma~\ref{lem:intersect}
we may bound the collision probability even if the set $B \defeq \{
b_1 . y, b_2 . y, \dots \}$ is of a special nature (as the $b_j$ have
to be invertible).  It is enough to require that the set $A \defeq
\{ a_1 . x, a_2 . x, \dots \}$ generated by the algorithm behaves as
random, which holds if the orbit map $\Psi_x \colon S \to X$,
$s \mapsto s . x$ has (nearly) constant-sized preimages.

Suppose the algorithm generates $k \defeq \vert A \vert$ elements
$a_i . x$ with $a_i \in S$ and $\ell \defeq \vert B \vert$ elements
$b_j . y$ with $b_j \in S^*$, which is upper bounded by $\vert S^*
\vert$.  Then we infer from Lemma~\ref{lem:intersect} that the success
probability of the algorithm is at least $\frac 1 2$ provided that
$\exp({- \frac{k \ell} n}) \le \frac 1 2$, or equivalently,
\[ k \ell \,\ge\, n \ln 2 \,. \]

Therefore, an optimal choice of parameters is $k, \ell \in
\Theta(\sqrt n)$, which is possible if a sufficient amount of
invertible elements is available and there is enough memory.  In such
a case the complexity of this algorithm is seen to be $O(\sqrt n)$.
Next we show an approach how to drastically reduce the memory
requirement.

\subsection*{A Pollard-rho attack for group actions}

Suppose now that a group~$G$ acts transitively on a set~$X$ of
size~$n$.  Given $x, y \in X$ the group action problem asks to find
$g \in G$ such that $y = g . x$.  We describe a Pollard-rho type
birthday attack for solving this problem, which is a slight adaption
and simplification of an algorithm given by Monico~\cite[Sec.~4.2]%
{Mon02}.

As before, the idea is to generate elements $a_i . x ,\, b_j . y \in X$
in a pseudorandom way and to provoke a collision $a_i . x = b_j . y$
from which a solution $g \defeq b_j^{-1} a_i$ to the group action
problem is deduced.  But here, this is to be done using very little
memory while still maintaining a heuristic square-root complexity
$O(\sqrt n)$.

The algorithm depends on a pseudorandom function $f \colon X \to G$.
Define a recursive sequence $(a_1, a_2, a_3, \dots)$ in~$G$ by
\[ a_1 \defeq a \in G ~\text{random} \,, \qquad
  a_{i+1} \defeq f(a_i . x) \, a_i \quad \text{for $i \ge 1$} \,, \]
hence $a_2 = f(a . x) \, a$, $a_3 = f(f(a . x) \, a . x) \, f(a . x)
\, a$, etc.  Observe that if $a_i . x = a_j . x$ for some~$i, j$ then
$a_{i+r} . x = a_{j+r} . x$ for all~$r$.  Similarly, define a
recursive sequence $(b_1, b_2, b_3, \dots)$ in~$G$ by
\[ b_1 \defeq b \in G ~\text{random} \,, \qquad
  b_{j+1} \defeq f(b_j . y) \, b_j \quad \text{for $j \ge 1$} \,, \]
and note that if $a_i . x = b_j . y$ then $a_{i+r} . x = b_{j+r} . y$
for all~$r$.  The algorithm now consists of two steps:
\begin{enumerate}
\item construct $a$-loop, i.e., find smallest~$k$ such that
  $a_k . x = a_{2k} . x$ \\ construct $b$-loop, i.e., find
  smallest~$\ell$ such that $b_{\ell} . y = b_{2\ell} . y$
\item find a collision in the sets
  \[ A \defeq \{ a_{k+i} . x \mid 0 \le i < k \} \quad\text{and}
    \quad B \defeq \{ b_{\ell+j} . y \mid 0 \le j < \ell \} \]
  by checking $a_k . x = b_{\ell+s} . y$ for $s = 0, 1, 2, \dots$;
  if successful output $g \defeq b_{\ell+s}^{-1} a_k$
\end{enumerate}
Notice here that if $a_{k+i} . x = b_{\ell+j} . y$ for some $i, j$, then
$a_k . x = a_{2k} . x = b_{\ell+s} . y$ for $s \defeq j \!+\! k \!-\! i$.

We sketch a heuristic analysis of this algorithm.  With good
probability we have $\vert A \vert = k$ and $\vert B \vert = \ell$ (if
the preperiod does not exceed the period) and we expect $k, \ell
\in O(\sqrt n)$ by the birthday paradox.  Assuming that $A, B$ behave
as random subsets of~$X$ we can estimate the probability of a
collision using Lemma~\ref{lem:intersect} by \[ \Pr(A \cap B \ne
\varnothing) \ge 1 \!-\! \exp \big( {- \tfrac{k \ell} n} \big) 
\in \Omega(1) \,, \] provided that $k, \ell \in \Omega(\sqrt n)$.
These arguments show that the algorithm has an expected running time
of $O(\sqrt n)$ and succeeds with non-negligible probability.

This algorithm may be adapted to work also for proper semigroup
actions, in case the semigroup~$S$ has sufficiently many invertible
elements~$S^*$ and we are able to define a sequence $(b_1, b_2, b_3,
\dots)$ in~$S^*$.  Alternatively, as pointed out by Maze~\cite{Maz03},
semigroups with a large subgroup can be attacked by excluding all
non-units first and then employing the above algorithm on the unit
subgroup.


\section{Pohlig-Hellman type reductions}%
\label{sec:pohlig-hellman}

In this section we examine how the hardness of the semigroup action
problem is affected by exploiting certain substructures.  We recollect
the framework of Pohlig-Hellman type reductions from~\cite[Sec.~4.3]%
{Gni14} and discuss a few special cases for cryptographic group
actions.

Recall that the classical Pohlig-Hellman algorithm~\cite{PH78}
essentially reduces the difficulty of the discrete logarithm problem
in a cyclic group~$G$ of order~$n$ to that in a group of order the
largest prime factor of~$n$.  The algorithm can be viewed as applying
multiplication-by-$m$ maps \[ \lambda_m \colon \Z_n \to \Z_n \,,
  \quad s \mapsto m s \] in order to reduce the problem to the
action of the (smaller) ideal $m \Z_n$.  The following general concept
captures this scenario and many others.

\begin{defn}\label{def:reduction}
  Let~$S$ and~$T$ be semigroups, let~$X$ be an $S$-set and~$Y$ be a
  $T$-set.  A \emph{reduction} $(f, F, G)$ consists of maps
  $f \colon S \to T$ and $F, G \colon X \to Y$ such that
  \[ f(s) . G(x) = F(s . x) \] for all $s \in S$ and $x \in X$,
  see the diagram below.
  \[ \xymatrix@R=15mm@C=2cm{
      S \ar[r]_{\Psi_x}="x" \ar_f[d] & X \ar^F[d] \\
      T \ar[r]^{\Psi_{G(x)}}="Gx" & Y \ar^G"x";"Gx" } \]
\end{defn}

For a general reduction $(f, F, G)$ the map $f \colon S \to T$ is
not required to be a semigroup homomorphism.  However, if $F = G$
it is reasonable to assume this, since $f(s t) . F(x) = F(s t . x)
= F(s . t . x) = f(s) . F(t . x) = f(s) . f(t) . F(x) = f(s) f(t)
. F(x)$ for all $s, t \in S$ and $x \in X$.

\begin{exa}
  Let~$G$ be a cyclic group with generator~$\alpha$ of composite order
  $n = k m$.  In the Pohlig-Hellman setup above we may apply the
  isomorphism $m \Z_n \cong \Z_k$ after the multiplication-by-$m$
  map~$\lambda_m$, which results in the natural map $\pi \colon \Z_n
  \to \Z_k$.  Then one has a reduction $(\pi, \Phi_m, \Phi_m)$, where
  $\Phi_m \colon \langle \alpha \rangle \to \langle \alpha^m \rangle$,
  $g \mapsto g^m$.  Indeed, there holds $(g^m)^{\pi(s)} = (g^s)^m$ for
  any $s \in \Z_n$ and $g \in G$, see below.
  \[ \xymatrix@R=15mm@C=2cm{
      \Z_n \ar[r]_{\Psi_g}="x" \ar_{\pi}[d] &
      \langle \alpha \rangle \ar^{\Phi_m}[d] \\
      \Z_k \ar[r]^{\Psi_{g^m}}="Gx" &
      \langle \alpha^m \rangle \ar^{\Phi_m}"x";"Gx" } \]
\end{exa}

Given any reduction $(f, F, G)$, an adversary who can solve the
semigroup action problem in~$T$ can restrict the search in~$S$ to
preimages of the solutions in~$T$ under the map~$f$.  Indeed, given a
semigroup action problem instance $x, y \in X$ where $y = s . x$, one
reduces it to the instance $G(x), F(y) \in Y$ where $F(y) = f(s) .
G(x)$.  Nevertheless one should note the following caveats:
\begin{enumerate}
\item In general, for a single solution $t \in \on{im} f \subseteq T$
  of $F(y) = t . G(x)$ it is not clear whether there always exists
  $s \in f^{-1}(t)$ such that $y = s . x$.
\item If the above solution $t \in \on{im} f$ is unique, then $t =
  f(s)$ and we can deduce that $s \in f^{-1}(t)$, which may be a much
  smaller set than~$S$.  However, the preimage under~$f$ could be hard
  to obtain, and often it does not admit a useful semigroup action
  structure itself.
\end{enumerate}  

Similarly to the Pohlig-Hellman algorithm we may apply several
reductions $(f_1, F_1, G_1), \dots, (f_r, F_r, G_r)$ in parallel
to further narrow down the search space to $s \in f_1^{-1}(t_1)
\cap \dots \cap f_r^{-1}(t_r)$ if suitable solutions $t_1, \dots,
t_r$ of the reduced semigroup action problems are found.

\subsection*{Examples based on non-units}

We call a reduction \emph{effective} if the maps $f, F, G$ are
efficiently computable and there holds $1 < \vert T \vert < \vert
S \vert$.  The next result describes a very general class of
reductions.

\begin{prop}
  Let~$S$ be a monoid, $X$ an $S$-set and $m \in S$.  Then the
  triple $(\lambda_m, \Phi_m, \on{id})$ forms a reduction, which is
  effective iff~$m$ is not left-absorbing and not invertible, provided
  the semigroup operation and action are efficient.
\end{prop}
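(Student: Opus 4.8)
The plan is to separate the two assertions: that the triple is \emph{always} a reduction, and that its effectiveness is governed precisely by the two stated conditions on~$m$. The first part is immediate. Taking the target to be the image $T \defeq mS = \{ ms \mid s \in S \}$, which is a subsemigroup of~$S$ since $(ms)(ms') = m(sms') \in mS$, and letting $Y \defeq X$ be a $T$-set by restriction, the defining identity of a reduction becomes $\lambda_m(s) . \on{id}(x) = \Phi_m(s . x)$, that is $(ms) . x = m . (s . x)$. This is nothing but the semigroup-action axiom, so $(\lambda_m, \Phi_m, \on{id})$ is a reduction for every~$m$, with no further hypotheses. Since $\on{id}$ is trivially computable and $\lambda_m$, $\Phi_m$ inherit efficiency from the semigroup operation and the action respectively, effectiveness reduces entirely to the numerical condition $1 < \vert mS \vert < \vert S \vert$.

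It remains to characterize the two endpoints of $1 \le \vert mS \vert \le \vert S \vert$, where these bounds hold because $m = m \cdot 1 \in mS$ (using that $S$ is a monoid) and $mS \subseteq S$. For the lower endpoint I would argue that $\vert mS \vert = 1$ forces $mS = \{ m \}$, hence $ms = m$ for all $s \in S$, which is exactly the statement that~$m$ is left-absorbing; the converse is clear. Thus $\vert mS \vert > 1$ holds if and only if~$m$ is not left-absorbing.

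For the upper endpoint, $\vert mS \vert = \vert S \vert$ says that $\lambda_m \colon S \to S$ is surjective, and by finiteness of~$S$ this is equivalent to $\lambda_m$ being bijective. The main obstacle is to upgrade the one-sided inverse produced by surjectivity to a genuine inverse of~$m$: choosing $m'$ with $m m' = 1$, I would compute $m(m'm) = (mm')m = m = m \cdot 1$ and then cancel~$m$ on the left, using the injectivity of~$\lambda_m$, to obtain $m'm = 1$; hence~$m$ is invertible. The reverse implication is routine, since an inverse~$m^{-1}$ makes $\lambda_{m^{-1}}$ a two-sided inverse of~$\lambda_m$. Therefore $\vert mS \vert < \vert S \vert$ holds if and only if~$m$ is not invertible. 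As the two conditions constrain the two endpoints independently, combining them gives $1 < \vert mS \vert < \vert S \vert$ exactly when~$m$ is neither left-absorbing nor invertible, which is the claimed effectiveness criterion.
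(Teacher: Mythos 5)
Your proof is correct and follows essentially the same route as the paper: verify the reduction identity directly from the action axiom, then characterize effectiveness via the size of the right ideal $mS$, with $\vert mS\vert = 1$ corresponding to $m$ left-absorbing and $\vert mS\vert = \vert S\vert$ to $m$ invertible. The only difference is that you spell out details the paper leaves implicit, notably the finiteness argument upgrading surjectivity of $\lambda_m$ to a two-sided inverse, which is a welcome elaboration rather than a deviation.
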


\begin{proof}
  For all $x \in X$ and $s \in S$ there holds that
  \[ \lambda_m(s) . x = m s . x = m . (s . x) = \Phi_m(s . x) \,. \]
  The reduction maps the semigroup~$S$ onto its right ideal $m S$,
  which is a non-trivial proper subsemigroup of~$S$ iff $\lambda_m$
  is not constant or surjective.
\end{proof}

Akin to the Pohlig-Hellman approach for groups of prime power order,
we may apply this reduction recursively.  The idea for solving the
semigroup action problem $y = s . x$ is to find $t_1 \defeq m_1 s$
from $m_1 . y = t_1 . x$ (bearing in mind the caveats above), and for
obtaining~$t_1$ to find $t_2 \defeq m_2 m_1 s$ from $m_2 m_1 . y
= t_2 . x$ etc., for suitable non-units $m_1, m_2, \dots$ of~$S$.

Hence, the result constitutes a considerable threat to the security of
a cryptosystem based on proper semigroup actions.  On the other hand,
the rather degenerate semigroup~$S$ of Example~\ref{exa:generic} (2)
contains many non-absorbing non-units and yet has a generic complexity
of $\Omega(n)$ for the semigroup action problem.  In this case, one
has $\vert m S \vert = 2$ for all non-absorbing non-units~$m$ and the
corresponding preimage sets are not useful (see also \cite[Ex.~79]%
{Gni14}).

In conclusion, while it is conceivable that certain semigroup actions
avoid the attacks outlined in this section and are in fact interesting
for cryptography, possible candidates have to be chosen very
carefully.

\subsection*{Examples based on automorphisms}

Now we consider a second family of reductions, which also applies to
group actions.  Let us start with a general semigroup~$S$ and an
$S$-set~$X$.  Its automorphisms are the bijective maps $\phi \colon
X \to X$ such that $\phi(s . x) = s .  \phi(x)$ for all $s \in S$,
$x \in X$.  We use automorphism groups to construct equivalence
relations on~$X$ compatible with the action as follows.  Suppose that
a group~$H$ acts on~$X$ by automorphisms, i.e., we have $H \times
X \to X$, $(h, x) \mapsto h . x$, such that $h . s . x = s . h. x$ for
$h \in H$, $s \in S$, $x \in X$.  Let $X \modsim$ be the set of its
orbits $[x] = \{ h . x \mid h \in H \}$.  This induces an action
\[ S \times X \modsim \to X \modsim \,, \quad
  (s, [x]) \mapsto [s . x] \,. \]

Suppose next that the semigroup~$S$ is a commutative monoid with unit
group~$S^*$.  Then each $h \in S^*$ defines an $S$-automorphism
$\Psi_h \colon X \to X$ by $x \mapsto h . x$, so we may consider any
subgroup~$H$ of~$S^*$ as a group of automorphisms of~$X$.  The
relation $s \approx t$ iff $s H = t H$ then provides a semigroup
congruence on~$S$, and we denote by $S / H \defeq \{ s H \mid s
\in S \}$ its classes.  Moreover, $s \approx t$ implies
$s . x \sim t . x$, for any $s, t \in S$ and $x \in X$, so we can
define an action \[ S / H \times X \modsim \to X \modsim \,, \quad
  ([s]_{\approx}, [x]_{\sim}) \mapsto [s . x]_{\sim} \,. \]
We hence obtain a reduction $(f, F, F)$ with $f \colon S \to
S \approxsim$ and $F \colon X \to X \modsim$ being the natural maps,
see below.
\[ \xymatrix@R=15mm@C=2cm{
    S \ar[r]_{\Psi_x}="x" \ar_f[d] & X \ar^F[d] \\
    S/H \ar[r]^{\Psi_{[x]}}="Gx" & X \modsim \ar^F"x";"Gx" } \]
In particular, if $S = G$ is an abelian group, so that $S^* = G$, we
can employ any subgroup~$H$ of~$G$ and thus $X \modsim$ becomes a
$G/H$-set.  Regarding the practical implications however, the reduced
action may not be efficiently computable, because the equality of
orbits could be difficult to check.  In any case, an effective
reduction attack cannot be of generic type, since we are guaranteed
a lower square-root complexity for such algorithms due to
Theorem~\ref{thm:generic}.

This approach was also described in the context of CSIDH, cf.~%
\cite[Sec.~12]{Smi18}.

\begin{rem}
  As in Example~\ref{exa:groupexp} consider the discrete logarithm
  setup of an abelian group~$X$ of order~$n$, i.e., we have the
  action of $(\Z_n, \cdot)$ on~$X$ by exponentiation.  Every group
  automorphism is also an automorphism of~$X$ as an $\Z_n$-set, thus
  any subgroup~$H$ of the automorphism group of~$X$ induces an action
  \[ (\Z_n, \cdot) / H \times X \modsim \to X \modsim ,\, \quad
    ([s], [x]) \mapsto [x^s] \,. \]
  In the special case of $H = \{ \pm 1 \}$ we have the orbits
  $[x] = \{ x, x^{-1} \}$, reflecting the practice in elliptic curve
  cryptography to identify the points~$\pm P$ and thus use only
  the $x$-coordinate~\cite{Mil86}.

  Such reductions could potentially weaken the security of the
  discrete logarithm problem in groups for which the automorphism
  group has several subgroups, e.g., in cyclic groups of order~$n$
  where $\phi(n) = \vert \Z_n^* \vert$ is smooth.  However, as
  mentioned above these reductions appear to be not effective in
  general.
\end{rem}

Let us state a concrete example illustrating this phenomenon.

\begin{exa}
  Consider the discrete logarithm problem in a group~$X$ of prime
  order $n = 29$ (e.g., a subgroup of~$\Z_{59}^*$).  Then we have
  $S = (\Z_{29}, \cdot)$ and hence
  \[ S^* = \Z_{29}^* \cong \Z_{28} \cong \Z_4 \!\times\! \Z_7 \,. \]
  Therefore, one could try to exploit the subgroups $H_1 = \langle
  \alpha_1 \rangle$, $H_2 = \langle \alpha_2 \rangle$ of order~$4$
  and~$7$ respectively, say $\alpha_1 = 12 \in \Z_{29}^*$ and
  $\alpha_2 = 7 \in \Z_{29}^*$, to attack to problem.  But the reduced
  group actions seem to be more difficult to compute, e.g., we have
  $\Z_{29}^* / \langle 7 \rangle \times X \modsim \to X \modsim$ where
  $[x] = \{ x, x^7, x^{20}, x^{24}, x^{23}, x^{16}, x^{25} \} \in X
  \modsim$.
\end{exa}


\section{Conclusion}

Cryptographic group actions or semigroup actions provide a framework
that encompasses both the classical discrete logarithm problem as well
as interesting proposals for post-quantum cryptography.  In this
article we have examined this framework from a theoretical viewpoint
and studied the semigroup action problem as an analog of the discrete
logarithm problem.

In the case of group actions, the generic complexity can be considered
well understood, as there is both a square-root lower bound and a
square-root upper bound for the group action problem.  On the other
hand, for proper semigroup actions the situation appears to be less
clear.  The generic lower bound may exceed the square-root barrier,
however such instances tend to be degenerate and not interesting for
cryptography applications.

We also have discussed the potential of certain substructures to
weaken the hardness of the semigroup action problem, in particular in
the presence of non-units.  While such substructures do not guarantee
to practically break the semigroup action problem, they should be
taken into account when designing cryptosystems based on semigroup or
group actions.


\end{document}